\def\isdraft{1}
\newtheorem{theorem}{Theorem}
\newtheorem{proposition}[theorem]{Proposition}
\newtheorem{fact}[theorem]{Fact}
\theoremstyle{definition} 
\newtheorem{example}[theorem]{Example}
\title{Sequential decomposition of propositional logic programs\\(under construction!)}
\author{
    Christian Anti\'c
}
\address{
    christian.antic@icloud.com\\
    Vienna University of Technology\\
    Vienna, Austria
}
\begin{document}
\begin{abstract} 
	This paper studies the sequential decomposition of propositional logic programs by analyzing Green's relations $\mathcal{L,R,J}$---well-known in semigroup theory---in the logic programming domain.
\end{abstract}
\maketitle

\section{Introduction}

Rule-based reasoning is an essential part of human intelligence prominently formalized in artificial intelligence research via logic programs, which are formal constructs written in a rule-like sublanguage of predicate logic \cite<cf.>{Apt90,Apt97,Hodges94,Lloyd87,Makowsky87,Sterling94}. Logic programming is a well-established subfield of theoretical computer science and AI with applications to such diverse fields as expert systems, database theory, diagnosis, planning, learning, natural language processing, and many others \cite<cf.>{Baral03,Coelho88,Kowalski79,Pereira02}. The propositional Horn fragment consisting of propositional facts and rules is particularly relevant in database theory via the query language datalog \cite<cf.>{Ceri90} and in answer set programming \cite{Brewka11}, a prominent and successful dialect of logic programming incorporating negation as failure \cite{Clark78} and many other constructs such as aggregates \cite<cf.>{Faber04,Pelov04}, external sources \cite{Eiter05}, and description logic atoms \cite{Eiter08a}.

Describing complex objects as the composition of elementary ones is a common strategy in computer science and science in general as being able to assemble complex programs from simple ones is desirable from both the theoretical and the practical viewpoint. \citeA{Antic_i3} has recently introduced the sequential composition of propositional logic programs. Studying compositions and decompositions of programs gives insights into the modularity of programs \cite{Brogi99,Bugliesi94} both on a mathematical and on a programming level. For example, in the context of logic-based analogical reasoning and learning, programs having similar syntactic decompositions turn out to have similar semantics which is exploited for constructing novel programs from given ones \cite{Antic_i2}. That is, the similarity of programs is detected by comparing their respective decompositions. 

\citeA{OKeefe85} is probably the first to study the composition of logic programs and \citeA{Bugliesi94} study {\em modular logic programming} by defining algebraic operations on programs similar to \citeA{OKeefe85}. The relationship to sequential composition is discussed in more detail in \citeA[§7]{Antic_i3}.

The purpose of this paper is to initiate the study of the sequential {\em decomposition} of propositional logic programs by investigating \citeS{Green51} relations $\mathcal{L,R,J}$---which play a key role in semigroup theory \cite<cf.>{Howie03}---in the logic programming setting. To the best of our knowledge, this has not been considered before. Green's relations yield information about the {\em logcal} structure of the space of all propositional programs: given programs $P$ and $R$, Green's $\mathcal J$-relation is defined as $P\leq_\mathcal J R$ iff $P=(QR)S$, for some programs $Q,S$, in which case we can imagine $R$ to be a ``subprogram'' of $P$ in a way not expressible via union alone, that is, sequential composition allows us to split programs in a novel ``horizontal'' way. 

In a broader sense, this paper is a further step towards an algebraic theory of logic programming and in the future we plan to adapt and generalize the results of this paper to wider classes of programs, most importantly to first-, and higher-order logic programs \cite{Apt90,Chen93,Lloyd87,Miller12}, and non-monotonic logic programs under the stable model \cite{Gelfond91} or answer set semantics and extensions thereof \cite<cf.>{Baral03,Brewka11,Eiter09,Lifschitz19}---see \citeA{Antic_i6}.\\

\section{Propositional logic programs}

We recall the syntax and semantics of propositional logic programs \cite<cf.>{Lloyd87}.

\subsection{Syntax}

In the rest of the paper, $A$ denotes a finite alphabet of propositional atoms.

A ({\em propositional Horn logic}) {\em program} over $A$ is a finite set of {\em rules} of the form
\begin{align}\label{equ:r} 
    a_0\leftarrow a_1,\ldots,a_k,\quad k\geq 0,
\end{align} where $a_0,\ldots,a_k\in A$ are propositional atoms. It will be convenient to define, for a rule $r$ of the form \prettyref{equ:r}, $h(r):=\{a_0\}$ and $b(r):=\{a_1,\ldots,a_k\}$ extended to programs by $h(P):=\bigcup_{r\in P}h(r)$ and $b(P):=\bigcup_{r\in P}b(r)$. In this case, the {\em size} of $r$ is $k$ and denoted by $sz(r)$. A {\em fact} is a rule with empty body and a {\em proper rule} is a rule which is not a fact. We denote the facts and proper rules in $P$ by $f(P)$ and $p(P)$, respectively. We call a rule $r$ of the form \prettyref{equ:r} {\em Krom}\footnote{Krom rules were first introduced and studied by \cite{Krom67}.} if it has at most one body atom. 


Define the {\em dual} of $P$ by
\begin{align*} 
    P^d:=f(P)\cup\{b\leftarrow h(r)\mid r\in p(P),b\in b(r)\}.
\end{align*} Roughly, we obtain the dual of a program by reversing all the arrows of its proper rules.

\subsection{Semantics}

An {\em interpretation} is any set of atoms from $A$. We define the {\em entailment relation}, for every interpretation $I$, inductively as follows: (i) for an atom $a$, $I\models a$ if $a\in I$; (ii) for a set of atoms $B$, $I\models B$ if $B\subseteq I$; (iii) for a rule $r$ of the form \prettyref{equ:r}, $I\models r$ if $I\models b(r)$ implies $I\models h(r)$; and, finally, (iv) for a propositional logic program $P$, $I\models P$ if $I\models r$ holds for each rule $r\in P$. In case $I\models P$, we call $I$ a {\em model} of $P$. The set of all models of $P$ has a least element with respect to set inclusion called the {\em least model} of $P$ and denoted by $LM(P)$. We say that $P$ and $R$ are {\em equivalent} if $LM(P)=LM(R)$.

Define the {\em van Emden-Kowalski operator} \cite{vanEmden76} of $P$, for every interpretation $I$, by
\begin{align*} 
    T_P(I):=\{h(r)\mid r\in P:I\models b(r)\}.
\end{align*} This operator yields the immediate consequence derivable from a given interpretations and it is at the core of logic programming as its least and prefixed fixed points capture the semantics of a program. In fact, we have the following well-known operational characterization of models in terms of the van Emden-Kowalski operator.

\begin{proposition}\label{prop:prefixed} An interpretation $I$ is a model of $P$ iff $I$ is a prefixed point of $T_P$.
\end{proposition}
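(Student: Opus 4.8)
The plan is to prove both implications by directly unwinding the definition of $T_P$ and the entailment relation, since a prefixed point here means an interpretation $I$ satisfying $T_P(I)\subseteq I$. The crucial observation is that membership $a\in T_P(I)$ holds precisely when there is a rule $r\in P$ of the form \prettyref{equ:r} with $h(r)=\{a\}$ whose body is satisfied by $I$, i.e.\ $I\models b(r)$. I would first record this characterization of $T_P(I)$ explicitly, together with the fact that $I\models h(r)$ is equivalent to $a_0\in I$ because $h(r)$ is a singleton; this reduces the statement to a formal manipulation of quantifiers over the rules of $P$.

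First I would prove the forward direction: assume $I\models P$, and take an arbitrary atom $a\in T_P(I)$. By the characterization above there is a rule $r\in P$ with head $a$ and $I\models b(r)$. Since $I\models P$ gives $I\models r$, clause (iii) of the entailment definition yields $I\models b(r)\Rightarrow I\models h(r)$, hence $I\models h(r)$, i.e.\ $a\in I$. As $a$ was arbitrary, $T_P(I)\subseteq I$, so $I$ is a prefixed point.

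Conversely, I would assume $T_P(I)\subseteq I$ and fix an arbitrary rule $r\in P$; to show $I\models r$ it suffices, by clause (iii), to assume $I\models b(r)$ and derive $I\models h(r)$. Under this assumption the head atom of $r$ lies in $T_P(I)$ by the characterization, and therefore in $I$ by the prefixed-point hypothesis, which is exactly $I\models h(r)$. Since $r$ was arbitrary, $I\models r$ for every $r\in P$, so $I\models P$ by clause (iv).

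There is no genuine obstacle here beyond bookkeeping: the proof is a straightforward double-inclusion argument that transports the per-rule condition $I\models b(r)\Rightarrow h(r)\subseteq I$ into the global containment $T_P(I)\subseteq I$ and back. The only subtlety I would flag is notational, namely reading $T_P(I)$ as the union of the singleton heads $h(r)$ over all applicable rules, so that $a\in T_P(I)$ is correctly matched with the existence of a body-satisfied rule having head $a$; once this is pinned down, both directions follow immediately from the definitions of $\models$ and $T_P$.
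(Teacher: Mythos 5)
Your proof is correct and is exactly the routine unwinding of the definitions of $T_P$ and $\models$ that the paper has in mind when it states this as a well-known fact without proof. The one subtlety you flag---reading $T_P(I)$ as a set of atoms rather than a set of singleton heads---is the right thing to pin down, and nothing else is needed.
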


We call an interpretation $I$ a {\em supported model} of $P$ if $I$ is a fixed point of $T_P$. We say that $P$ and $R$ are {\em subsumption equivalent} \cite{Maher88} if $T_P=T_R$, denoted by $P\equiv_{ss}R$.

The {\em least fixed point} computation of $T_P$ is defined by
\begin{align*} 
    &T_P^0:=\emptyset,\\
    &T_P^{n+1}:=T_P(T_P^n),\\
    &T_P^\infty:=\bigcup_{n\geq 0}T_P^n.
\end{align*}

Finally, we have the constructive operational characterization of least models in terms the van Emden-Kowalski operator \cite{vanEmden76} given by
\begin{align*} 
    LM(P)= T_P^\infty.
\end{align*}

\section{Composition}\label{sec:Composition}

In this section, we recall the sequential composition of propositional logic programs as presented in \citeA{Antic_i3}.

In the rest of the paper, $P$ and $R$ denote propositional logic programs, and $I$ and $J$ denote interpretations over some joint alphabet $A$.

The rule-like structure of logic programs induces naturally a compositional structure as follows. We define the ({\em sequential}) {\em composition} of $P$ and $R$ by
\begin{align*} 
    P\circ R=\left\{h(r)\leftarrow b(S)\;\middle|\; r\in P, S\subseteq_{sz(r)} R,h(S)=b(r)\right\}.
\end{align*} We will write $PR$ in case the composition operation is understood. Roughly, we obtain the composition of $P$ and $R$ by resolving all body atoms in $P$ with the ``matching'' rule heads of $R$.

Notice that we can reformulate sequential composition as
\begin{align}\label{equ:bigcup} 
    P\circ R=\bigcup_{r\in P}(\{r\}\circ R),
\end{align} which directly implies right-distributivity of composition, that is,
\begin{align}\label{equ:(P_cup_Q)_circ_R} 
    (P\cup Q)\circ R=(P\circ R)\cup (Q\circ R)\quad\text{holds for all programs }P,Q,R.
\end{align} However, the following counter-example shows that left-distributivity fails in general:  
\begin{align*} 
    \{a\leftarrow b,c\}\circ(\{b\}\cup\{c\})=\{a\} \quad\text{and}\quad (\{a\leftarrow b,c\}\circ\{b\})\cup(\{a\leftarrow b,c\}\circ\{c\})=\emptyset.
\end{align*}

Define the {\em unit program} by
\begin{align*} 
    1_A:=\{a\leftarrow a\mid a\in A\}.
\end{align*} In the sequel, we will omit the reference to $A$. 

The space of all programs over some fixed alphabet is closed under sequential composition with the neutral element given by the unit program and the empty program serves as a left zero, that is, we have
\begin{align*} 
    P\circ 1=1\circ P=P \quad\text{and}\quad \emptyset\circ P=\emptyset.
\end{align*} Composition is, in general, {\em not} associative \cite<cf.>[Example 10]{Antic_i2}.

We can simulate the van Emden-Kowalski operator on a {\em syntactic} level without any explicit reference to operators via sequential composition in the sense that for any interpretation $I$,
\begin{align}\label{equ:T_P} 
    T_P(I)=PI.
\end{align}

As facts are preserved by composition and since we cannot add body atoms to facts via composition on the right, we have
\begin{align}\label{equ:IP=I} 
    IP=I.
\end{align}

Our next observation is that we can compute the heads and bodies of $P$ via
\begin{align}\label{equ:h(P)} 
    h(P)=PA \quad\text{and}\quad b(P)=p(P)^d A.
\end{align} Moreover, we have
\begin{align}\label{equ:h(PR)} 
    h(PR)\subseteq h(P) \quad\text{and}\quad b(PR)\subseteq b(R).
\end{align}

Given an interpretation $I$, we define
\begin{align*} 
    I^\ominus:=1^{A-I}\cup I \quad\text{and}\quad I^\oplus:=\{a\leftarrow(\{a\}\cup I)\mid a\in A\}.
\end{align*} It is not difficult to show that $PI^\ominus$ is the program $P$ where all occurrences of the  atoms in $I$ are removed from the rule bodies in $P$, that is, we have
\begin{align}\label{equ:PI^ominus} 
    PI^\ominus&=\{h(r)\leftarrow (b(r)-I)\mid r\in P\}.
\end{align} Similarly, $PI^\oplus$ is the program $P$ with the atoms in $I$ added to the rule bodies of all {\em proper} rules in $P$, that is, we have
\begin{align}\label{equ:PI^oplus}
    PI^\oplus&=f(P)\cup\{h(r)\leftarrow (b(r)\cup I)\mid r\in p(P)\}.
\end{align}

The left and right reducts can be represented via composition and the unit program by
\begin{align}\label{equ:^IP} 
    ^IP=1^I\circ P \quad\text{and}\quad P^I=P\circ 1^I.
\end{align} This implies
\begin{align}\label{equ:I_cap_J} 
    I\cap J={^J}I\stackrel{\prettyref{equ:^IP}}=(1^J)I.
\end{align}

We can compute the facts in $P$ via
\begin{align}\label{equ:f(P)} 
    f(P)=P\circ\emptyset.
\end{align} Moreover, we have
\begin{align}\label{equ:f(P)^ast} 
    P=f(P)\cup p(P)\stackrel{\prettyref{equ:P_cup_I}}=f(P)^\ast p(P).
\end{align}

We make the convention that for any $n\geq 3$,
\begin{align*} 
    P^n:=(((PP)P)\ldots P)P\quad\text{($n$ times)}.
\end{align*} The {\em Kleene star} and {\em plus} of $P$ is defined by
\begin{align}\label{equ:P^ast} 
    P^\ast:=\bigcup_{n\geq 0}P^n \quad\text{and}\quad P^+:=P^\ast P.
\end{align} For example, we have
\begin{align}\label{equ:I^ast} 
    I^\ast \stackrel{\prettyref{equ:IP=I}}=1\cup I \quad\text{and}\quad I^+=I,
\end{align} which shows
\begin{align}\label{equ:P_cup_I} 
    P\cup I \stackrel{\prettyref{equ:P_cup_I}}=(1\cup I)P \stackrel{\prettyref{equ:I^ast}}=I^\ast P.
\end{align} 

The {\em omega} of $P$ is defined by
\begin{align}\label{equ:P^omega} 
    P^\omega:=f(P^+)\stackrel{\prettyref{equ:f(P)}}=P^+\circ\emptyset.
\end{align} The least model of $P$ is captured by the omega of $P$ in the sense that \cite[Theorem 40]{Antic_i3}
\begin{align*} 
    LM(P)=P^\omega.
\end{align*}

\section{\texorpdfstring{$\mathcal{L}$}{L}-Relation}\label{sec:L}

We begin by studying \citeS{Green51} $\mathcal L$-relation, instantiated in the setting of propositional logic programs as follows. Given two programs $P$ and $R$, we have
\begin{align*} 
    P\leq_\mathcal{L} R \quad\text{iff}\quad P=QR,\quad\text{for some program $Q$.}
\end{align*} We then call $Q$ a {\em prefix} of $P$. In case $P\leq_\mathcal{L} R$ and $R\leq_\mathcal{L} P$, we say that $P$ and $R$ are {\em $\mathcal{L}$-equivalent} denoted by $P\equiv_\mathcal{L} R$. Roughly, we have $P\leq_\mathcal L R$ iff we can obtain the program $P$ from $R$ by manipulating the rules in $R$ via composition on the left. 

Since we cannot manipulate the rule bodies in $R$ via left composition, as an immediate consequence of \prettyref{equ:h(PR)}, we have
\begin{align*} 
    P\leq_\mathcal L R \quad\text{implies}\quad b(P)\subseteq b(R),
\end{align*} and, for every Krom program $K$,
\begin{align*} 
    P\leq_\mathcal L K \quad\text{implies}\quad P\text{ is Krom}.
\end{align*}

The following $\mathcal L$-relations are simple consequences from known facts listed in \prettyref{sec:Composition}, for every program $P$ and interpretations $I,J$:
\begin{align}\label{equ:L}
    P\leq_\mathcal L 1,\quad
    I\stackrel{\prettyref{equ:IP=I}}{\leq_\mathcal L} P,\quad
    I\stackrel{\prettyref{equ:IP=I}}{\equiv_\mathcal L} J,\quad
    ^IP\stackrel{\prettyref{equ:^IP}}{\leq_\mathcal L} P,\quad
    P\cup I\stackrel{\prettyref{equ:P_cup_I}}{\leq_\mathcal L} P,\quad
    P^+\stackrel{\prettyref{equ:P^ast}}{\leq_\mathcal L} P.
\end{align} The first relation shows that $1$ is $\mathcal L$-maximal, whereas the second relation shows that interpretations are $\mathcal L$-minimal, and the third relations shows that all interpretations are $\mathcal L$-equivalent. Moreover, we have
\begin{align*} 
    P\leq_\mathcal L I \quad\text{iff}\quad P=PI\stackrel{\prettyref{equ:T_P}}=T_P(I) \quad\text{iff}\quad \text{$P$ is an interpretation.}
\end{align*}

We have the following characterization of the $\mathcal L$-relation.

\begin{theorem} For any programs $P$ and $R$, we have $P\leq_\mathcal L R$ iff $P=QR$ for
\begin{align*} 
    Q:=\left\{h(r)\leftarrow h(S) \;\middle|\; 
    \begin{array}{l}
        r\in P\\
        S\subseteq_{sz(r)} R\\
        b(S)=b(r)\\
        \{h(r)\leftarrow h(S)\}R\subseteq P
    \end{array}\right\}.
\end{align*}
\end{theorem}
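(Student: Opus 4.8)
The plan is to prove the two directions separately, the forward implication being immediate: if $P=QR$ for the displayed $Q$, then $P\leq_\mathcal{L} R$ holds by the very definition of the $\mathcal L$-relation, so only the converse requires work.

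For the converse, assume $P\leq_\mathcal{L} R$, say $P=Q_0R$ for some program $Q_0$, and prove $P=QR$ by two inclusions. The inclusion $QR\subseteq P$ is the routine half: every rule $q\in Q$ satisfies $\{q\}R\subseteq P$ by the last clause in the definition of $Q$, so right-distributivity in the form \prettyref{equ:bigcup} gives $QR=\bigcup_{q\in Q}(\{q\}R)\subseteq P$. Note that this half does not even use the hypothesis $P\leq_\mathcal{L} R$; it only records that every rule admitted into $Q$ is \emph{safe}.

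The substance is the inclusion $P\subseteq QR$. Fix a rule $p\in P$. Since $P=Q_0R$, the composition produces $p$ from some rule $q_0\in Q_0$ together with a matching family $S'\subseteq_{sz(q_0)}R$ with $h(S')=b(q_0)$, so that $p=\big(h(q_0)\leftarrow b(S')\big)$; moreover $\{q_0\}R\subseteq Q_0R=P$, so $q_0$ is automatically safe. The idea is to show that $q_0$ already lies in $Q$, i.e.\ that it can be written in the canonical shape $h(r)\leftarrow h(S)$ with $r\in P$, $S\subseteq_{sz(r)}R$ and $b(S)=b(r)$. To this end I would choose a matching family $S\subseteq R$ with $h(S)=b(q_0)$ (so $|S|=|b(q_0)|$, with distinct heads) and set $r:=h(q_0)\leftarrow b(S)$; since $r\in\{q_0\}R\subseteq P$ we have $r\in P$, and then $b(S)=b(r)$ and $h(r)\leftarrow h(S)=h(q_0)\leftarrow b(q_0)=q_0$, so that $q_0\in Q$ and $p\in\{q_0\}R\subseteq QR$.

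The hard part will be the size condition $S\subseteq_{sz(r)}R$, that is $|S|\leq sz(r)=|b(S)|$: since $|S|=|b(q_0)|$, I must select the matching rules of $R$ (one for each atom of $b(q_0)$) so that their combined bodies contain at least $|b(q_0)|$ distinct atoms. This is a transversal-type requirement on $R$ that can genuinely fail for a careless choice, for instance when all the relevant rules of $R$ share one body atom. I expect to handle it by choosing $S$ so as to maximize $|b(S)|$, and, in the degenerate case where even the maximum falls short, by retreating to the rule $r:=p$ together with a minimal subfamily $S\subseteq S'$ that still covers $b(p)$: such an $S$ meets the size condition automatically, but its safety $\{h(p)\leftarrow h(S)\}R\subseteq P$ must then be verified separately. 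Controlling this interplay between the size budget $sz(r)$ and the safety clause $\{q\}R\subseteq P$ is, I expect, the crux of the whole argument.
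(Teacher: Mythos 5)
Your overall architecture matches the paper's: the right-to-left direction is immediate, and $QR\subseteq P$ follows from right-distributivity together with the safety clause built into $Q$, exactly as in the paper's proof. Where things get interesting is the inclusion $P\subseteq QR$, and here you have put your finger on a genuine subtlety. For the witness $q_0\in Q_0$ producing $p$, membership of $q_0$ in $Q$ requires exhibiting $r\in P$ and $S\subseteq_{sz(r)}R$ with $b(S)=b(r)$ and $h(r)\leftarrow h(S)=q_0$; since $h(S)=b(q_0)$ forces $|S|\geq|b(q_0)|$ while the size clause demands $|S|\leq sz(r)=|b(S)|$, one needs a family of rules of $R$ with heads exactly $b(q_0)$ whose combined body is at least as large as the family itself, and this can fail: with $R=\{b\leftarrow d,\ c\leftarrow d\}$ and $q_0=a\leftarrow b,c$ one gets $P=\{a\leftarrow d\}$ and $q_0\notin Q$. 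The paper's proof does not engage with this at all --- it records only $S\subseteq R$ and then asserts $s=h(r)\leftarrow h(S)\in Q$ --- so you are right that there is something nontrivial to prove at exactly this point.

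The problem is that your proposal stops at naming the difficulty rather than resolving it, and neither of your two escape routes works as sketched. Maximizing $|b(S)|$ over transversals can still leave $|b(S)|<|S|$, as the example above shows. The retreat to $r:=p$ with a minimal subfamily $S\subseteq S'$ covering $b(p)$ does satisfy the size bound, but the resulting candidate $h(p)\leftarrow h(S)$ need not be safe, and which minimal subfamily you pick matters: with $R=\{b\leftarrow d,\ c\leftarrow d,\ b\leftarrow e\}$ and $q_0=a\leftarrow b,c$ one gets $P=\{a\leftarrow d,\ a\leftarrow d,e\}$, and for $p=a\leftarrow d$ the candidate $a\leftarrow b$ is unsafe (it generates $a\leftarrow e\notin P$) while $a\leftarrow c$ is safe. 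Nothing in your sketch shows that a safe choice always exists, nor does it handle the case where every shrunken candidate is unsafe and $p$ must instead be certified through a rule $q\in Q$ obtained from a different $r\in P$ with a larger body. So as it stands the argument is incomplete precisely at the step you yourself call the crux: you have correctly identified a gap that the paper's own proof also leaves open, but you have not closed it.
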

\begin{proof} The direction from right to left is trivial. For the other direction, we need to deduce $P=QR$ from $P\leq_\mathcal L R$. We first prove that $QR$ is a subset of $P$:
\begin{align*} 
    QR\stackrel{\prettyref{equ:bigcup}}=\bigcup_{s\in Q}(\{s\}R)=\bigcup_{\substack{r\in P,S\subseteq_{sz(r)} R\\b(S)=b(r)\\\{h(r)\leftarrow b(S)\}R\subseteq P}}\{h(r)\leftarrow h(S)\}R\;\subseteq P.
\end{align*} It remains to show that $P$ is a subset of $QR$. For this, we need to find for each rule $r\in P$ some rule $h(r)\leftarrow h(S)\in Q$ such that $S\subseteq_{sz(r)} R$, $b(S)=b(r)$, and $\{h(r)\leftarrow h(S)\}R\subseteq P$. By assumption, we have $P\leq_\mathcal L R$ and recall that $P\leq_\mathcal L R$ iff there is some program $Q'$ such that $P=Q'R$. This implies 
\begin{enumerate}
    \item $\{r\}=\{s\}S$ for some $s\in Q'$, $S\subseteq R$, with $h(s)=h(r)$, $b(S)=b(r)$, $b(s)=h(S)$ and, hence, $s=h(r)\leftarrow h(S)$; and
    \item $\{s\}R\subseteq P$.
\end{enumerate} This shows $s=h(r)\leftarrow h(S)\in Q$ and, hence, $r\in QR$.
\end{proof}

\section{\texorpdfstring{$\mathcal R$}{R}-Relation}\label{sec:R}

In this section, we study \citeS{Green51} $\mathcal R$-relation, instantiated in the setting of propositional logic programs as follows. Given two programs $P$ and $R$, we define
\begin{align*} 
    P\leq_\mathcal{R} R \quad\text{iff}\quad P=RS,\quad\text{for some program $S$.}
\end{align*} We then call $S$ a {\em suffix} of $P$. In case $P\leq_\mathcal{L} R$ and $R\leq_\mathcal{L} P$, we say that $P$ and $R$ are {\em $\mathcal{R}$-equivalent} denoted by $P\equiv_\mathcal{R} R$.

\begin{example} Consider the programs
\begin{align*} 
    \pi_{(a\,b)}:= \left\{
    \begin{array}{l}
        a\leftarrow b\\
        b\leftarrow a
    \end{array}
    \right\} \quad\text{and}\quad R:= \left\{
    \begin{array}{l}
        a\leftarrow b\\
        b\leftarrow b
    \end{array}
    \right\}.
\end{align*} 

We have
\begin{align*} 
    R=\pi_{(a\,b)}R
\end{align*} which shows
\begin{align*} 
    R\leq_\mathcal R \pi_{(a\,b)}.
\end{align*} 

On the other hand, there can be no program $S$ such that $\pi_{(a\,b)}=RS$ since we cannot rewrite the rule body $b$ of $R$ into $a$ and $b$ simultaneously via composition on the right. This shows
\begin{align*} 
    R<_\mathcal R\pi_{(a\,b)}.
\end{align*}
\end{example}

As an immediate consequence of \prettyref{equ:h(PR)}, we have
\begin{align}\label{equ:h(P)_subseteq_h(R)} 
    P\leq_\mathcal R R \quad\text{implies}\quad h(P)\subseteq h(R).
\end{align}

Given atoms $a,b\in A$, the identity
\begin{align*} 
    \{a\}=\{a\leftarrow b\}\{b\}
\end{align*} shows
\begin{align*} 
    \{a\}\leq_\mathcal R \{a\leftarrow b\}.
\end{align*} Since we cannot add body atoms to facts via composition on the right, we have
\begin{align*} 
    \{a\leftarrow b\}\not\leq_\mathcal R\{a\}.
\end{align*} Hence,
\begin{align*} 
    \{a\}<_\mathcal R\{a\leftarrow b\}.
\end{align*} On the other hand, given atoms $a_0,\ldots,a_k\in A$, $k\geq 1$, we have
\begin{align*} 
    \{a_0\leftarrow a_1\}&=\{a_0\leftarrow a_1,\ldots,a_k\}\{a_2,\ldots,a_k\}^\ominus\\
    \{a_0\leftarrow a_1,\ldots,a_k\}&=\{a_0\leftarrow a_1\}\{a_2,\ldots,a_k\}^\oplus
\end{align*} which shows
\begin{align*} 
    \{a_0\leftarrow a_1\}\equiv_\mathcal R\{a_0\leftarrow a_1,\ldots,a_k\}.
\end{align*} This means that all proper rules are with identitical head atom are $\mathcal R$-equivalent.

The following $\mathcal R$-relations are simple consequences from known facts listed in \prettyref{sec:Composition}, for every program $P$ and interpretation $I$:
\begin{align*} 
    &h(P)\stackrel{\prettyref{equ:h(P)}}{\leq_\mathcal R} P,\quad
    b(P)\stackrel{\prettyref{equ:h(P)}}{\leq_\mathcal R} p(P)^d,\quad
    P\stackrel{\prettyref{equ:f(P)^ast}}{\leq_\mathcal R} f(P)^\ast,\quad
    P\cup I \stackrel{\prettyref{equ:P_cup_I}}{\leq_\mathcal R} I^\ast,\quad
    f(P)\stackrel{\prettyref{equ:f(P)}}{\leq_\mathcal R} P,\\
    &PI^\oplus\leq_\mathcal R P,\quad
    PI^\ominus\leq_\mathcal R P,\quad
    P^I\stackrel{\prettyref{equ:^IP}}\leq_\mathcal R P,\quad
    T_P(I)\stackrel{\prettyref{equ:T_P}}{\leq_\mathcal R} P,\quad
    I\cap J\stackrel{\prettyref{equ:I_cap_J}}{\leq_\mathcal R} 1^I, 1^J,\quad
    P^\omega\stackrel{\prettyref{equ:P^omega}}{\leq_\mathcal R} P.
\end{align*} Moreover, we have
\begin{align*} 
    P&\leq_\mathcal R I \quad\text{iff}\quad P=IS\stackrel{\prettyref{equ:IP=I}}=I,
\end{align*} which means that
\begin{align*} 
    I\equiv_\mathcal R J \quad\text{iff}\quad I=J.
\end{align*}

We have the following characterization of the $\mathcal R$-relation.

\begin{theorem} For any programs $P$ and $R$, we have $P\leq_\mathcal R R$ iff for each rule $r\in P$ there is a rule $s\in R$ and a program $S_r$ such that $\{r\}=\{s\}S_r$ and $RS_r\subseteq P$, in which case we have $P=RS$ with $S:=\bigcup_{r\in P}S_r$.
\end{theorem}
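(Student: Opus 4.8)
The plan is to prove the two directions of the equivalence separately and then check the ``in which case'' clause, throughout exploiting the reformulation $RS=\bigcup_{s\in R}(\{s\}\circ S)$ from \prettyref{equ:bigcup} together with the monotonicity of composition in its right argument: if $S'\subseteq S$ then $RS'\subseteq RS$, since every witnessing subset for a rule in $RS'$ is also a witnessing subset in $RS$.

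For the direction from left to right, I would assume $P\leq_\mathcal R R$, so $P=RS$ for some $S$. Writing $P=\bigcup_{s\in R}(\{s\}\circ S)$, each rule $r\in P$ lies in $\{s\}\circ S$ for some $s\in R$; by the definition of composition this means $r=h(s)\leftarrow b(S_r)$ for a subset $S_r\subseteq_{sz(s)}S$ with $h(S_r)=b(s)$, and I would take exactly this $S_r$ as the witness. Since $S_r$ has $sz(s)$ rules whose heads are precisely the atoms of $b(s)$, the only matching subset inside $S_r$ is $S_r$ itself, so $\{s\}\circ S_r=\{r\}$, which gives $\{r\}=\{s\}S_r$. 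Finally $RS_r\subseteq P$ is immediate from $S_r\subseteq S$ by monotonicity, because $RS_r\subseteq RS=P$. This settles the forward direction.

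For the converse I would assume the stated witnesses $s,S_r$ exist for every $r\in P$ and set $S:=\bigcup_{r\in P}S_r$. The inclusion $P\subseteq RS$ is then straightforward: for each $r$ we have $\{r\}=\{s\}S_r$ with $s\in R$ and $S_r\subseteq S$, hence $\{r\}=\{s\}\circ S_r\subseteq\{s\}\circ S\subseteq RS$ by monotonicity and \prettyref{equ:bigcup}, so $r\in RS$.

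The main obstacle is the reverse inclusion $RS\subseteq P$. The naive hope $RS=\bigcup_{r\in P}(RS_r)\subseteq P$ is exactly the place where the failure of left-distributivity bites: a witnessing subset $V\subseteq_{sz(u)}S$ generating a rule $t=h(u)\leftarrow b(V)\in RS$ may draw its rules from several different $S_r$ simultaneously, and such a ``mixed'' $V$ need not be contained in any single $S_r$, so the hypothesis $RS_r\subseteq P$ controls only the homogeneous combinations and not these cross-combinations. I therefore expect the crux of the argument to be the analysis of these mixed witnessing subsets: to push $RS\subseteq P$ through one must either show that every mixed $V$ still yields a rule already present in $P$, or restrict attention to canonically chosen witnesses $S_r$ for which no spurious cross-combinations arise. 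I would concentrate the effort here, as this is precisely where the non-left-distributivity of composition has to be tamed, and I would be careful to check whether the identity $P=RS$ for the specific choice $S=\bigcup_r S_r$ really holds or whether only $P\subseteq RS$ survives in general.
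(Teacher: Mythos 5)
Your forward direction is correct and is essentially the paper's own argument: from $P=RS'$ you extract, for each $r\in P$, a witnessing pair $(s,S_r)$ with $S_r\subseteq_{sz(s)}S'$ and $\{r\}=\{s\}S_r$, and $RS_r\subseteq RS'=P$ follows from monotonicity of composition in its right argument. Nothing to add there.

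The obstacle you flag in the converse is not a defect of your attempt---it is a genuine flaw in the statement, and the gap cannot be closed. The paper dismisses the right-to-left direction as ``trivial'', which tacitly uses $R\bigl(\bigcup_{r}S_r\bigr)=\bigcup_{r}RS_r$, i.e.\ exactly the left-distributivity that the paper itself shows to fail. The mixed witnessing subsets you describe really do produce rules outside $P$. Concretely, take
\begin{align*}
R=\left\{
\begin{array}{l}
a\leftarrow b,c\\
a'\leftarrow b\\
a''\leftarrow c
\end{array}
\right\}
\quad\text{and}\quad
P=\left\{
\begin{array}{lll}
a\leftarrow d & a'\leftarrow d & a''\leftarrow d\\
a\leftarrow e & a'\leftarrow e & a''\leftarrow e
\end{array}
\right\}.
\end{align*}
Every rule of $P$ has a witness in the sense of the theorem: for $a\leftarrow d$ take $s=a\leftarrow b,c$ and $S_r=\{b\leftarrow d,\ c\leftarrow d\}$, so that $\{s\}S_r=\{a\leftarrow d\}$ and $RS_r=\{a\leftarrow d,\ a'\leftarrow d,\ a''\leftarrow d\}\subseteq P$; symmetrically for $a\leftarrow e$; and for the $a'$- and $a''$-rules take the singletons $\{b\leftarrow d\}$, $\{b\leftarrow e\}$, $\{c\leftarrow d\}$, $\{c\leftarrow e\}$, whose composites with $R$ contain no $a$-rule at all and hence land inside $P$. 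Yet $P\not\leq_\mathcal R R$: any $S'$ with $RS'=P$ must contain $b\leftarrow d$ and $b\leftarrow e$ (these are the only way $a'\leftarrow b$ can produce $a'\leftarrow d$ and $a'\leftarrow e$) and likewise $c\leftarrow d$ and $c\leftarrow e$, whence the mixed subset $\{b\leftarrow d,\ c\leftarrow e\}$ forces $a\leftarrow d,e\in RS'\setminus P$. So only the inclusion $P\subseteq RS$ survives for $S=\bigcup_{r}S_r$, exactly as you suspected; the theorem needs an additional compatibility hypothesis on the family $(S_r)_{r\in P}$ (for instance, requiring $R\bigl(\bigcup_{r}S_r\bigr)\subseteq P$ outright, at which point the characterization becomes nearly tautological). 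You should trust your analysis over the paper's one-line dismissal here.
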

\begin{proof} The direction from right to left holds trivially. For the other direction, assume $P\leq_\mathcal R R$, which is equivalent to $P=RS'$, for some program $S'$. This means that for each rule $r\in P$ there is a rule $s\in R$ and a subset $S_r$ of size $sz(s)$ of $S'$ such that $\{r\}=\{s\}S_r$ and $RS_r\subseteq P$.
\end{proof}

\section{\texorpdfstring{$\mathcal J$}{J}-Relation}\label{sec:J}

In this section, we study \citeS{Green51} $\mathcal J$-relation defined in the setting of propositional logic programs as follows. Given two programs $P$ and $R$, we define\footnote{Recall that composition is not associative, which means that we cannot omit the brackets.}
\begin{align*} 
    P\leq_\mathcal J R\quad\text{iff}\quad P=(QR)S,\quad\text{for some programs $Q,S$.}
\end{align*} We then call $Q$ a {\em prefix} and $S$ a {\em suffix} of $P$. In case $P\leq_\mathcal J R$ and $R\leq_\mathcal J P$, we say that $P$ and $R$ are {\em $\mathcal J$-equivalent} denoted by $P\equiv_\mathcal J R$.



The $\mathcal J$-relation coincides with the propositional version of \citeS[Definition 5]{Antic_i8} similarity relation and most of the results in this section are instances of \citeS{Antic_i8} more general results which we repeat here for completeness.

Of course,
\begin{align}\label{equ:L_R-J} 
    P\leq_\mathcal L R \quad\text{or}\quad P\leq_\mathcal R R \quad\text{implies}\quad P\leq_\mathcal J R,
\end{align} which means that the $\mathcal L$-relations of \prettyref{sec:L} and $\mathcal R$-relations of \prettyref{sec:R} yield $\mathcal J$-relations.

We have
\begin{align*} 
    P\equiv_\mathcal J I \quad&\text{iff}\quad P=(QI)S \stackrel{\prettyref{equ:T_P}}=(T_Q(I))S \stackrel{\prettyref{equ:IP=I}}=T_Q(I),\quad\text{for some programs $Q,S$},\\ 
        &\text{iff}\quad \text{$P$ is an interpretation.}
\end{align*} In particular, this means that all interpretations are $\mathcal J$-equivalent (which already follows from the fact that all interpretations are $\mathcal L$-equivalent; see \prettyref{equ:L}). Moreover, we have
\begin{align}\label{equ:f} 
    P\equiv_\mathcal J f(P) \quad&\text{iff}\quad P=f(P),
\end{align} which is shown as follows. We have
\begin{align}\label{equ:Q_circ_f(P)} 
    P\leq_\mathcal J f(P) 
        \quad&\text{iff}\quad P=(Q f(P))S \stackrel{\prettyref{equ:T_P},\prettyref{equ:IP=I}}=T_Q(f(P)),\quad\text{for some programs $Q,S$,}\\
        &\text{iff}\quad P\subseteq f(P),
\end{align} which together with $f(P)\subseteq P$ yields $P=f(P)$.

We have the following characterization of the $\mathcal J$-relation.

\begin{fact}\label{thm:J} For any programs $P$ and $R$, we have $P\leq_\mathcal J R$ iff for each rule $r\in P$ there is a rule $s_r$, a subset $S_{r,R}$ of $R$, and a program $S_r$ such that
\begin{align*} 
    \{r\}=(\{s_r\}S_{r,R})S_r \quad\text{and}\quad (\{s_r\}R)S_r\subseteq P.
\end{align*} In this case, we have $P=(QR)S$ with $Q:=\bigcup_{r\in P}\{s_r\}$ and $S:=\bigcup_{r\in P}S_r$. 
\end{fact}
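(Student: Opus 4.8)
The plan is to prove the two implications separately, using repeatedly that right composition distributes over unions, \prettyref{equ:bigcup} and \prettyref{equ:(P_cup_Q)_circ_R}, whereas left-distributivity does not. The one tool I will lean on is that composition is monotone in both arguments: $A\subseteq B$ gives $AC\subseteq BC$ by \prettyref{equ:(P_cup_Q)_circ_R}, and $C\subseteq D$ gives $AC\subseteq AD$ because the composition defining $AC$ only ever selects subsets of $C$, each of which is also a subset of $D$.

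For the direction $P\leq_\mathcal J R\Rightarrow$ conditions, write $P=(Q_0R)S_0$ and expand both compositions by right-distributivity to $(Q_0R)S_0=\bigcup_{s\in Q_0}\big((\{s\}R)S_0\big)$. Fixing $r\in P$, it lies in $(\{s\}R)S_0$ for some $s\in Q_0$; put $s_r:=s$. Applying right-distributivity once more inside $\{s_r\}R$ and then inside the outer composition, the definition of $\circ$ yields a selection $S_{r,R}\subseteq_{sz(s_r)}R$ with $h(S_{r,R})=b(s_r)$, so that $\{s_r\}S_{r,R}=\{h(s_r)\leftarrow b(S_{r,R})\}$, together with a selection $S_r\subseteq S_0$ with $\{r\}=(\{s_r\}S_{r,R})S_r$; this is the first required identity. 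The second, $(\{s_r\}R)S_r\subseteq P$, is then pure monotonicity: $S_r\subseteq S_0$ and $\{s_r\}\subseteq Q_0$ give $(\{s_r\}R)S_r\subseteq(\{s_r\}R)S_0\subseteq(Q_0R)S_0=P$. Setting $Q:=\bigcup_{r\in P}\{s_r\}\subseteq Q_0$ and $S:=\bigcup_{r\in P}S_r\subseteq S_0$, the same monotonicity yields $(QR)S\subseteq(Q_0R)S_0=P$, while summing the per-rule identities gives $P\subseteq(QR)S$; hence $P=(QR)S$ with the advertised witnesses.

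For the converse I would set $Q,S$ as above and first note the easy inclusion: for each $r$, $S_{r,R}\subseteq R$, $\{s_r\}\subseteq Q$ and $S_r\subseteq S$ give $\{r\}=(\{s_r\}S_{r,R})S_r\subseteq(\{s_r\}R)S_r\subseteq(QR)S$, so $P\subseteq(QR)S$. After expanding $(QR)S=\bigcup_{r\in P}\big((\{s_r\}R)S\big)$, everything reduces to the reverse inclusion $(\{s_r\}R)S\subseteq P$ for each $r$, and this is where I expect the real difficulty. The hypothesis only controls $(\{s_r\}R)S_r$, whereas $(\{s_r\}R)S$ composes against the whole $S=\bigcup_{r'}S_{r'}$; since left-distributivity fails (the counterexample following \prettyref{equ:(P_cup_Q)_circ_R}), the composition may pair heads drawn from different $S_{r'}$ and manufacture bodies occurring in no single $(\{s_r\}R)S_{r'}$, so $(\{s_r\}R)S$ can strictly exceed $\bigcup_{r'}(\{s_r\}R)S_{r'}$. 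The way I would close this is to use that the $S_r$ and $s_r$ arise from a genuine decomposition, exactly as produced in the forward direction: if $S\subseteq S_0$ for some $P=(Q_0R)S_0$, then monotonicity again gives $(QR)S\subseteq(Q_0R)S_0=P$, so $P=(QR)S$. Hence the explicit witnesses realize $P$ precisely when the $S_r$ are drawn from an actual suffix, and the converse should be argued under that reading rather than from freely chosen per-rule data.
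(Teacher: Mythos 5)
The paper states \prettyref{thm:J} without proof (it is given as a Fact, with a remark deferring to more general results elsewhere), so there is no in-paper argument to compare against; I am therefore judging your attempt on its own terms. Your forward direction is correct and complete: unfolding $P=(Q_0R)S_0$ via \prettyref{equ:bigcup} and \prettyref{equ:(P_cup_Q)_circ_R}, reading off the selections $S_{r,R}$ and $S_r$ from the definition of $\circ$, and then using monotonicity of $\circ$ in both arguments to obtain $(\{s_r\}R)S_r\subseteq P$ and $(QR)S\subseteq P$ is exactly the right argument, and it also yields the ``in this case'' clause for the witnesses so produced.

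The gap you flag in the converse is real, and it cannot be closed, because the right-to-left implication is false as stated. Take $P=\{a\leftarrow c,\ b\leftarrow d\}$ and $R=\{p\leftarrow q\}$. Choosing $s_{r_1}=a\leftarrow p$, $S_{r_1,R}=R$, $S_{r_1}=\{q\leftarrow c\}$ and $s_{r_2}=b\leftarrow p$, $S_{r_2,R}=R$, $S_{r_2}=\{q\leftarrow d\}$ satisfies both hypotheses, since $(\{s_{r_i}\}S_{r_i,R})S_{r_i}=(\{s_{r_i}\}R)S_{r_i}=\{r_i\}\subseteq P$; yet $Q=\{a\leftarrow p,\ b\leftarrow p\}$ and $S=\{q\leftarrow c,\ q\leftarrow d\}$ give $(QR)S=\{a\leftarrow c,\ a\leftarrow d,\ b\leftarrow c,\ b\leftarrow d\}\neq P$. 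In fact $P\not\leq_\mathcal J R$ here at all: by \prettyref{equ:h(PR)} every rule of any $Q'R$ has body contained in $\{q\}$, and producing both $a\leftarrow c$ and $b\leftarrow d$ forces $a\leftarrow q,\ b\leftarrow q\in Q'R$ and $q\leftarrow c,\ q\leftarrow d\in S'$, hence the unwanted rule $a\leftarrow d\in (Q'R)S'$. This is precisely the cross-term phenomenon you describe: $(\{s_r\}R)S$ may pair heads drawn from different $S_{r'}$, and the per-rule conditions cannot control this. Your proposed repair---requiring the $S_r$ to be drawn from an actual suffix $S_0$ of a genuine decomposition---does make the implication true, but only by presupposing $P\leq_\mathcal J R$, so it turns the converse into a tautology rather than a proof of the Fact as stated. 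The honest conclusion is that the stated condition is necessary but not sufficient, and the characterization needs an additional global hypothesis such as $(QR)S\subseteq P$ for the assembled $Q$ and $S$; note that the analogous characterization of $\leq_\mathcal R$ in Section 5 suffers from the same defect for the same reason.
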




\begin{example}[\citeA{Antic_i8}, Example 13] Consider the programs
\begin{align*} 
    P=\left\{
    \begin{array}{l}
        c\\
        a\leftarrow b,c\\
        b\leftarrow a,c
    \end{array}
    \right\} \quad\text{and}\quad \pi_{(a\,b)}=\left\{
    \begin{array}{l}
        a\leftarrow b\\
        b\leftarrow a
    \end{array}
    \right\}.
\end{align*} 

We construct the programs $Q$ and $S$ such that $P=(Q\pi_{(a\,b)})S$ according to \prettyref{thm:J}. Define
\begin{align*} 
    &r_1:=c \quad\text{implies}\quad s_{r_1}:=c \quad\text{and}\quad S_{r_1}:=\emptyset,\\
    &r_2:=a\leftarrow b,c \quad\text{implies}\quad s_{r_2}:=a\leftarrow a \quad\text{and}\quad S_{r_2}:=\{b\leftarrow b,c\},\\
    &r_3:=b\leftarrow a,c \quad\text{implies}\quad s_{r_3}:=b\leftarrow b \quad\text{and}\quad S_{r_3}:=\{a\leftarrow a,c\},
\end{align*} and
\begin{align*} 
    Q:=\{s_{r_1},s_{r_2},s_{r_3}\}=\{c\}^\ast \quad\text{and}\quad S:=S_{r_1}\cup S_{r_2}\cup S_{r_3}=\{c\}^\oplus-1^{\{c\}}.
\end{align*} This yields
\begin{align*} 
    P=\{c\}^\ast\pi_{(a\,b)}(\{c\}^\oplus-1^{\{c\}}) \quad\text{and}\quad \pi_{(a\,b)}=1^{\{a,b\}}P\{c\}^\ominus
\end{align*} which shows
\begin{align*} 
    P\equiv_\mathcal J\pi_{(a\,b)}.
\end{align*}
\end{example}

The following example shows that $\mathcal J$-equivalence and logical equivalence are ``orthogonal'' concepts \cite<cf.>[Example 15]{Antic_i8}.

\begin{example} The empty program is logically equivalent with respect to the least model semantics to the propositional program $a\leftarrow a$ consisting of a single rule. Since we cannot obtain the rule $a\leftarrow a$ from the empty program via composition, logical equivalence does not imply $\mathcal J$-equivalence. 

For the other direction, the computations
\begin{align*} 
    \left\{
    \begin{array}{l}
        a\\
        b\leftarrow a
    \end{array}
    \right\}= \left\{
    \begin{array}{l}
        a\\
        b\leftarrow a,b
    \end{array}
    \right\}\{b\}^\ominus \quad\text{and}\quad \left\{
    \begin{array}{l}
        a\\
        b\leftarrow a,b
    \end{array}
    \right\}=\left\{
    \begin{array}{l}
        a\\
        b\leftarrow a
    \end{array}
    \right\}\{b\}^\oplus
\end{align*} show that the programs $\left\{
    \begin{array}{l}
        a\\
        b\leftarrow a       
    \end{array}
    \right\}$ and $\left\{
    \begin{array}{l}
        a\\
        b\leftarrow a,b     
    \end{array}
    \right\}$ are $\mathcal J$-equivalent; however, they are obviously not logically equivalent.
\end{example}





\section{Future work}

The major line of future research is to lift the concepts and results of this paper from propostional to first-order \cite{Lloyd87,Apt90} and higher-order logic programs \cite{Chen93,Miller12}, and to answer set programs \cite{Gelfond91} \cite<cf.>{Lifschitz19,Brewka11a} possibly containing negation as failure \cite{Clark78} in rule bodies or disjunctions in rule heads \cite{Lobo92,Eiter97}. The sequential composition of first-order logic programs has been defined in \citeA[§3.1]{Antic_i2}, and the composition of answer set programs in \citeA{Antic_i6}. The definition of sequential composition of higher-order or disjunctive logic programs is missing.

\section{Conclusion}

This paper analyzed the sequential decomposition of propositional logic programs by studying Green's relations $\mathcal{L,R,J}$---well-known in semigroup theory---in the context of logic programming. In a broader sense, this paper is a further step towards an algebraic theory of logic programming.

\if\isdraft1\newpage\fi
\bibliographystyle{theapa}
\bibliography{/Users/christianantic/Bibdesk/Bibliography,/Users/christianantic/Bibdesk/Preprints,/Users/christianantic/Bibdesk/Publications}
\end{document}